\newtheorem{theorem}{Theorem}
\newtheorem{lemma}{Lemma}
\newtheorem{definition}{Definition}
\journal{journal}
\begin{document}

\begin{frontmatter}



\title{Multi-layer diffusion model of photovoltaic installations}


\author{Tomasz Weron}
\ead{tomasz.weron@pwr.edu.pl}
\address{Department of Applied Mathematics, Faculty of Pure and Applied Mathematics,\\ Wroc\l{}aw University of Science and Technology, Wroc\l{}aw, Poland}

\begin{abstract}
Nowadays, harmful effects of climate change are becoming increasingly apparent. A vital issue that must be addressed is the generation of energy from non-renewable and often polluting sources. For this reason, the development of renewable energy sources is of great importance. Unfortunately, too rapid spread of renewables can disrupt stability of the power system and lead to energy blackouts. One should not simply support it, without ensuring sustainability and understanding of the diffusion process. In this research, we propose a new agent-based model of diffusion of photovoltaic panels. It is an extension of the $q$-voter model that utilizes a multi-layer network structure. The novelty is that both opinion dynamics and diffusion of innovation are studied simultaneously on a multidimensional structure. The model is analyzed using Monte Carlo simulations and the mean-field approximation. The impact of parameters and specifications on the basic properties of the model is discussed. Firstly, we show that for a certain range of parameters, innovation always succeeds, regardless of the initial conditions. Secondly, that the mean-field approximation gives qualitatively the same results as computer simulations, even though it does not utilize knowledge of the network structure.
\end{abstract}

\begin{keyword}
agent-based modeling \sep complex networks \sep computational statistics \sep diffusion of~innovation \sep dynamical system \sep opinion dynamics \sep mathematical modeling \sep renewable energy

\end{keyword}

\end{frontmatter}


\section{Introduction}
\label{sec:intro}

Nowadays, we are experiencing effects of climate change and environmental pollution more and more often.
What might have once seemed insignificant, such as glaciers melting thousands of kilometers away, now begins to affect us directly. While the gigantic fires that devastated countries like Australia, Canada or Greece have spared the one of the author's origin, Poland, we experience another environmental disaster -- air pollution. In Poland, low-quality heating installations in single-family households or multi-family tenements are among the main causes of its formation~\cite{world2019air}. Generally speaking, the problem is the generation of energy from non-renewable and often polluting sources. High emissions of pollutants directly threaten our lives and, in the already coming future, lead to dangerous climate change \cite{GEE17}, of which global warming is probably the most widely discussed. It not only affects entire ecosystems and the comfort of life, but according to some studies may lead to aggravation and spread of many dangerous diseases as well \cite{MOR22}. 

That is why the development of renewable energy sources (renewables, RES) is so important. Across the European Union in 2023, about $45\%$ of electricity generation came from RES \cite{page_eu1}. Poland, although it saw a rapid expansion of renewables, still falls behind with around $26\%$ \cite{page_pl1}. In Europe, wind is the main source of renewable energy \cite{page_eu2}. Unfortunately, wind farms require huge investments and vast free space, far from residential buildings. An alternative devoid of these limitations, and rapidly gaining popularity in Poland over the past few years, are photovoltaic panels (PVs), see Fig.~\ref{fig:data} for exact numbers. Those can be installed on a roof of a single-family house without adversely affecting the quality of life, while the cost of such a project is achievable already for a middle-income family \cite{page_pl2}. However, it is not only important to provide this opportunity and encourage it through subsidy programs or public campaigns, but to ensure sustainability as well. Excessive support of RES diffusion can lead to a significant increase in the variability of demand for conventional generation on a 24-hour basis, and in extreme cases can even cause loss of stability of the power system. Examples are countries/regions with a large operation of the sun during a year, such as Australia \cite{MAT19} or California \cite{MAI17}, but also Poland's neighbor -- Germany \cite{GOO19}. Apart from that, there is another threat, more closely related to the common folk. Present transmission system grids in Poland and other countries are not prepared for such a rapid PVs expansion. If unrestricted, it may lead to grid overload and blackouts. In fact, this is an already emerging risk \cite{REI23}.

\begin{figure}
    \centering
	\includegraphics[width=0.8\textwidth]{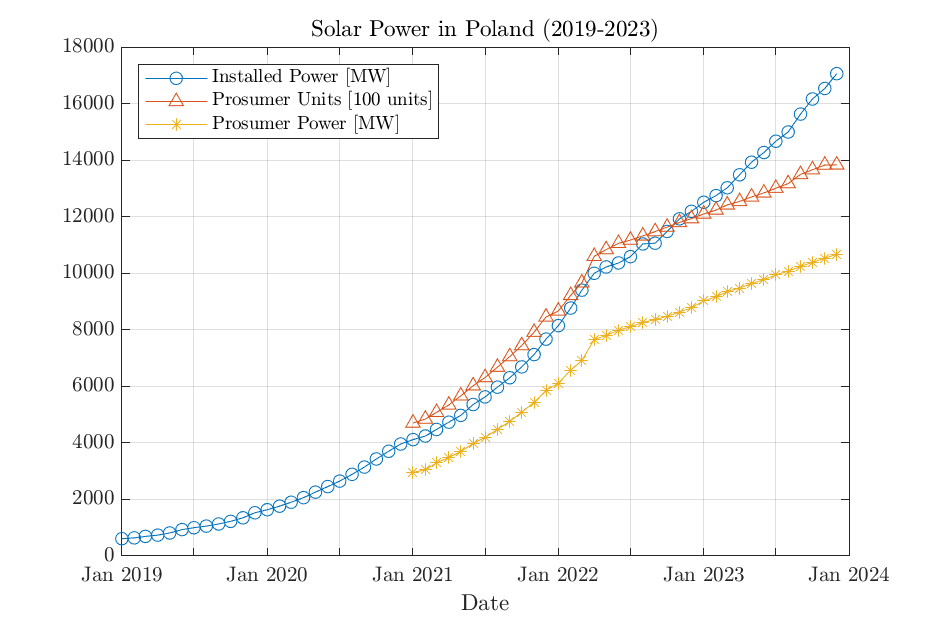}
	\caption{Data on solar installations in Poland in years 2019--2023, according to ARE (Polish energy market agency). More detailed data is being collected from 2021 onward, most likely due to the growing interest in the prosumer market. Data freely available at www.are.waw.pl.}
	\label{fig:data}
\end{figure}

The aim of this research is to better understand how various factors impact the diffusion of PVs. Diffusion is understood here as the process of spreading a new product through a population~\cite{ROG03}. Diffusion of innovation has been modeled for years. It began in a very simplified manner. A classic example is the Bass diffusion model -- a fully deterministic one, consisting of a single differential equation \cite{BAS69}. However, for a problem as complex as the diffusion of PV installations, aggregated models are just not enough \cite{RAH08}. They are unable to describe clustering of individuals, a phenomenon observed in real life~\cite{WOL20}. The aforementioned Bass diffusion of innovation model can dramatically change its behavior, when rewritten to an agent-based model (ABM) and tested on a network structure \cite{RAN11}. In social sciences, an agent-based model is usually understood as a simulation of certain interactions between so-called agents (representing individuals, households, companies, etc.), and taking place within a certain structure, symbolizing a network of acquaintances, contacts or cooperation \cite{MAC02}. Mathematically, we would call such a structure a graph, with vertices being the agents, and edges -- the connections between them~\cite{NEW10}. ABMs allow for a much more accurate representation of reality, including heterogeneity of individuals and the interactions between them \cite{NEW10}. 
As such, ABMs are currently one of the most powerful tools in studying opinion dynamics and diffusion of innovation \cite{KRU17}. Although they have been used for years to model the diffusion of new energy solutions \cite{KOW14,JEN15,BYR16,WER18,VDK19}, their applications in modeling the diffusion of PVs remain few \cite{PAL15,MIT19,CAP20,ZHA22}.

Although much more advanced than simple deterministic models, ABMs are still merely a hypothetical approximation of reality. With that in mind, one should adjust them to a problem at hand as accurately as possible, instead of constructing a one-size-fits-all model. A huge role here is played by the underlying structure. For example, when modeling a spread of gossip in a high school class, a simple network of class acquaintances would be sufficient. However, to properly represent the flow of information and exchange of opinions in the modern world, a much more complex structure is needed. This is where multi-layer networks come in. They are applied in many fields of science \cite{ALE19}, for they can provide multi-level representation of real world dependencies \cite{BIA18}. For instance, an individual (agent) may learn about recent sport results either from friends at work (one layer), or through social media (another layer).  

Sociologists have long pointed out that structures of social interaction should not be reduced to single-layer networks~\cite{BIA18}. However, multi-layer structures have only been studied intensively since the last decade \cite{KIV14}. Recently, they have been used in modeling the diffusion of innovation \cite{XIO18,LI19}. Nevertheless, this is still a relatively fresh concept. One issue that arises with multi-layer networks is generalization of models' rules that where originally implemented on single-layer structures. For instance, one may assume that social influence is only effective if it comes from all the layers (AND rule). However, it can also be assumed that the influence is effective even if it comes from a single layer only (OR rule) \cite{LEE14b}. In this research, we follow the approach from \cite{CHM15} and study both variants.

Monte Carlo (MC) computer simulations are the main research method for ABMs. However, in some limited cases, as for networks being complete or random graphs with a low clustering coefficient, analytical methods such as mean-field (MFA) or pair approximation (PA) can be used \cite{GLE13,JED17,WER24}. Unfortunately, real world networks are characterized by a high clustering coefficient \cite{NEW10}, so often the analytical results obtained this way differ quantitatively from ones given by Monte Carlo simulations \cite{JED17}. 
The need to sweep a multidimensional space of input parameters and perform multiple independent repetitions to obtain good statistical results makes Monte Carlo simulations very time-consuming. For this reason, we utilize both computer simulations and analytical methods in this research. 

A popular agent-based model that already has found its use in studying diffusion of eco-innovation is the $q$-voter model \cite{KOW14,WER18,CAS09b}. In this model, conformity is the basic form of social response. Agents are characterized by a single binary variable denoting their opinion and placed in nodes of some underlying graph structure. Their opinions change upon impact with so-called groups of influence, i.e. $q$ of the agent's neighbors chosen randomly, but only if a given group presents an unanimous opinion. A well-established extension to the $q$-voter model is the addition of independence -- the probability that an agent acts and changes opinion independently of the group of influence \cite{NYC12}. The $q$-voter model has been already examined on multi-layer networks, but to the best of our knowledge only ones consisting of two identical layers \cite{CHM15,GRA20}. 

In this paper, we introduce a new model based on the aforementioned $q$-voter model with independence. Our goal is to capture both opinion dynamics and diffusion of innovation jointly. Hence, we introduce (compared to the original $q$-voter model) a second agent's attribute, in addition to the opinion -- an adoption state. Similarly to the opinion, it is a binary one. Positive value means that an agent possesses a PV installation, negative -- it does not. An agent (in this case a household) can acquire knowledge of PVs from two sources. Either it sees panels on the roofs of neighboring households, or communicates with friends/colleagues. To model this twofold dynamic, we utilize a two-layer network as the underlying topology. The first layer of the network depicts the spacial location of an agent, similarly to \cite{SCH71,KOW18}. From a mathematical point of view it is just a square lattice (SL) with Moore's neighborhood \cite{MOO62}. At this level, each agent possesses only knowledge of the adoption states of the closest neighborhood, and these adoption states shape agent's opinion, according to the $q$-voter rule. The agent cannot share opinion or be subject to the opinions of the neighborhood. This is the visual observation part. A single-family house owner sees only PV panels, or lack thereof, on the roofs of neighboring houses, and does not exchange information or opinions with neighbors. The second layer of the network corresponds to structure of contacts and relationships of agents. For this reason, here we use two-dimensional Watts-Strogatz graph (WS2D) \cite{BAT22}. It possesses some characteristics of real world social networks \cite{NEW10,WAS94} and, in a specific case (when randomness equals $0$), reduces to a square lattice. At this level, a connection between two agents implies their familiarity, through which they can exchange opinions, but do not observe their adoption states. In this case, social influence is also described by the $q$-voter model. Lastly, it is an agent's opinion solely that impacts its adoption state. An agent with a positive opinion can install photovoltaic panels with probability $a_1$, one with a negative one can uninstall with probability $a_2$. The reason for this research is to thoroughly examine this mathematical model, to understand how its parameters shape the outcome.

The remainder of this article is organized as follows. In the next section, we provide a more detailed description of the model and the methods used. Then, in Section~\ref{sec:result}, we examine the two variants, AND and OR rules, which were explained above. Both by means of computer simulations, numerical and analytical methods. Finally, in Section~\ref{sec:conclusion}, we wrap up. Additionally, in the Appendix, we consider various initial conditions.

\section{Model and Methods}
\label{sec:model}

\begin{figure}
    \centering
    \includegraphics[width=\textwidth]{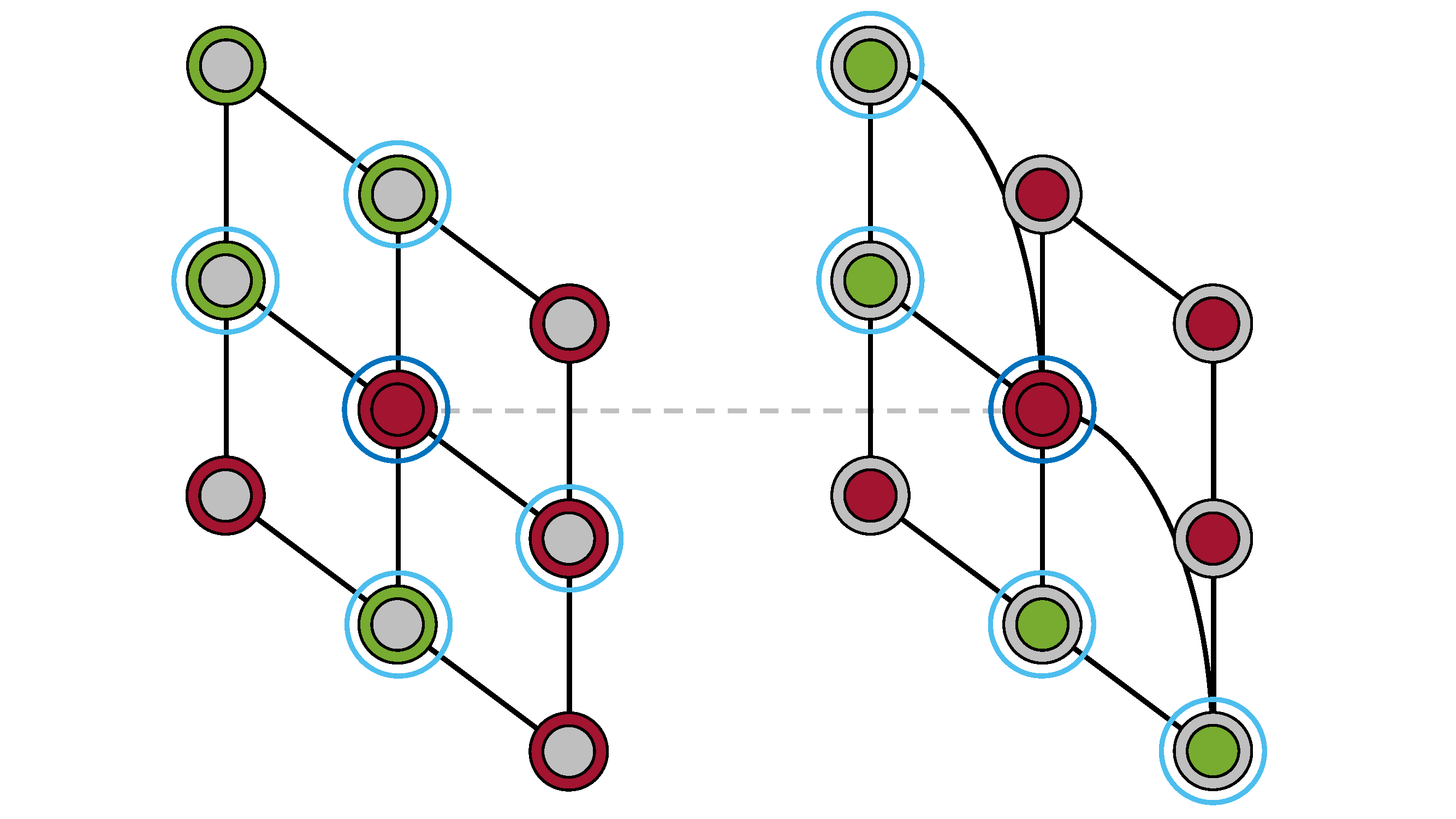}
    \caption{Graphical representation of the model. The network consists of 2 layers: Square Lattice (SL, left side), on which adoption states $A_i$ are visible and two-dimensional Watts-Strogatz (WS2D, right side) with opinions $S_i$. Adoption states $A_i$ are represented by outer circles (green -- $A_i=+1$, red -- $A_i=-1$), while opinions -- by inner circles. Grey areas correspond to adoption states or opinions unknown to the target agent (marked with a dark blue circle). Groups of influence (of size $q=4$, marked with light blue circles) are constructed independently on each layer. In the given example, such a choice would be sufficient to change target's opinion ($S_{target} \to +1$) in the OR variant, but not in the AND variant, as unanimity is only achieved in one of the two groups of influence.}
    \label{fig:model}
\end{figure}

\subsection{Simulation Model}
\label{ssec:model_sim}

We consider a set of $N$ agents, each of which is characterized by 2 binary variables: adoption $A_i = \pm 1$ (adopted or not adopted), and opinion $S_i = \pm 1$ (positive or negative), for $i=1,2,\dots,N$. Agents are located on a two-layer network (Fig.~\ref{fig:model}). The first layer of the network depicts the spacial location of an agent and is represented by a square lattice (SL) with Moore's neighborhood, meaning that each agent possesses 8 neighbors surrounding it (except for agents in the corners and along the edges, which have 3 and 5 neighbors, respectively) \cite{MOO62}. At this level, an agent sees only the adoption states of its neighbors. The second layer corresponds to the structure of social ties of agents, which is described by a two-dimensional Watts-Strogatz graph \cite{BAT22}. Here, an agent sees only the opinions of its neighbors. Each layer is a connected simple graph~\cite{GIB85}, as described in Definition~\ref{def:adjacency}.
\begin{definition}
    Let $G = \left[ G_{X,i,j} \right]$ denote the adjacency matrix for the two-layer network, with each layer $X$, $X \in \{1,2\}$, being a connected simple graph, i.e. an unweighted, undirected graph containing no graph loops or multiple edges~\cite{GIB85}. Then, for $i,j =1,2,\dots,N$:
    \begin{itemize}
        \item $\forall_X \forall_i \forall_j G_{X,i,j} = 1 \,\, \iff$ edge between $i$ and $j$ exists on layer $X$,
        \item $\forall_X \forall_i \forall_j G_{X,i,j} = 0 \,\, \iff$ edge between $i$ and $j$ does not exist on layer $X$,
        \item $\forall_X \forall_i \forall_j G_{X,i,j} = G_{X,j,i}$,
        \item $\forall_X \forall_i G_{X,i,i} = 0$.
    \end{itemize}
    \label{def:adjacency}
\end{definition}

We investigate the model through Monte Carlo simulations with a random sequential updating scheme. Within a single simulation, the time is measured in so-called Monte Carlo steps (MCS)~\cite{BIN10}. One MCS consists of $N$ elementary events, each of length $\Delta t = \frac{1}{N}$. In each elementary event, an agent $i$ is chosen at random (uniformly from all $N$). Then, with probability $p$, agent $i$ acts independently and changes its opinion $S_i$ randomly. With complementary probability $1-p$, the agent is susceptible to social influence (based on the $q$-voter model~\cite{CAS09b}), combined from both layers of the network with respect to the variant, AND or OR. Lastly, if agent $i$ has positive opinion $S_i=+1$, but negative adoption state $A_i=-1$, it changes the latter to positive, $A_i \to +1$, with probability $a_1$. Otherwise, if it possesses negative opinion $S_i=-1$, but positive adoption state $A_i=+1$, it looses the latter, i.e. $A_i \to -1$, with probability $a_2$. Simulation details are shown in Algorithm~\ref{alg:multilayer}, and the graphical representation of the model in Fig.~\ref{fig:model}. In Algorithm~\ref{alg:multilayer}:
\begin{itemize}
    \item $\mathcal{U}\left[0,1\right]$ stands for a continuous uniform distribution, while $\mathcal{U}\{X\}$ -- for a discrete one, where each element from set $X$ is chosen with equal probability.
    \item $G$ is the adjacency matrix for the two-layer network, as per Definition~\ref{def:adjacency},
    \item $p$, $a_1$, and $a_2$ denote probabilities of independence, adopting (getting positive adoption state) and unadopting, respectively. As probabilities, $p$, $a_1$, $a_2 \in [0,1]$. In this research, we only consider $a_1 \in (0,1]$ and:
    \begin{equation}
        a_2 = h \times a_1, \label{eq:h}
    \end{equation}
    where $h \in (0,1)$.
    \item $q$ stands for the size (the number of neighbors) of the group of influence. In here, we consider $q \in \mathbb{N}$, $q \geq 2$.
\end{itemize}
\RestyleAlgo{ruled}
\begin{algorithm}[ht!]
\caption{Simulation dynamics}\label{alg:multilayer}
\For{$t \coloneqq 1$ \KwTo $T$}{
    \For{$k \coloneqq 1$ \KwTo $N$}{
        $i \coloneqq i \sim \mathcal{U}\{1,\dots,N\}$\\
        $r \coloneqq r \sim \mathcal{U}\left[ 0,1 \right]$\\
        \uIf{$r < p$}{
            $r \coloneqq r \sim \mathcal{U}\left[ 0,1 \right]$\\
            \uIf{$r < \frac{1}{2}$}{
                $S_i \coloneqq -S_i$\\
            } 
        }
        \Else{
            \For{$l \coloneqq 1$ \KwTo $q$}{
                $j_{1,l} \coloneqq j_1 \sim \mathcal{U}\{ j_1 : G_{1,i,j_1}=1 \}$\\
                $j_{2,l} \coloneqq j_2 \sim \mathcal{U}\{ j_2 : G_{2,i,j_2}=1 \}$\\
            }
            $Q_1 \coloneqq \frac{1}{q} \sum_l^q A_{j_{1,l}}$\\
            $Q_2 \coloneqq \frac{1}{q} \sum_l^q S_{j_{2,l}}$\\
            \uIf{$\text{Variant} = \text{AND}$}{ 
                \uIf{$Q_1 + Q_2 = -2 S_i$}{
                    $ S_i \coloneqq -S_i$\\
                    }
            }
            \ElseIf{$\text{Variant} = \text{OR}$}{ 
                \uIf{$\left( Q_1 = -S_i \right.$ {\bf and} $\left. Q_2 \neq S_i \right)$ {\bf or} $\left( Q_1 \neq S_i \right.$ {\bf and} $\left. Q_2 = -S_i \right)$}{
                    $S_i \coloneqq -S_i$\\
                }
            }
        }
        $r \coloneqq r \sim \mathcal{U}\left[ 0,1 \right]$\\
        \uIf{$S_i = 1$ {\bf and} $A_i = -1$ {\bf and} $r < a_1$}{
            $A_i = 1$\\
        }
        \ElseIf{$S_i = -1$ {\bf and} $A_i = 1$ {\bf and} $r < a_2$}{
            $A_i = -1$\\
        }
    }
}
\end{algorithm}

To clarify, we do the following. At the beginning of each independent simulation (trajectory), we set the initial conditions. In simulations, we always start from a fully unadopted, negative state, i.e., $\forall_i \,\, A_i(0)=-1, \,\, S_i(0)=-1$. Then, we perform Monte Carlo steps until a determined time horizon $T$ is reached ($T$ is the number of MCS). We repeat independent simulations multiple times with the same set of parameter values for a better statistical accuracy. We also run separate simulations for different sets of parameter values. 

To examine the model on a macroscopic scale, we use two measures: concentration (fraction) of positive adoption states and concentration of positive opinions.
\begin{definition}
    Let $N$ be the number of agents, $A_i \pm 1$ the adoption state and $S_i = \pm 1$ the opinion of agent $i$, for $i=1,2,\dots,N$. Then, the concentrations of positive adoption states $c_A$ and opinions $c_S$ are given by:
    \begin{equation}
        c_A = \frac{1}{2N}\sum_{i=1}^N \left( A_i + 1 \right), \,\,\,\, c_S = \frac{1}{2N}\sum_{i=1}^N \left( S_i + 1 \right).
        \label{eq:concentration}
    \end{equation}
    \label{def:concentration}
\end{definition}
\noindent By definition $c_A \in [0,1]$ and $c_S \in [0,1]$. For the sake of simplicity, we refer to them as just concentrations of adoption states (or adopted) and opinions, respectively. 

\subsection{Mean-Field Approximation}
\label{ssec:model_mfa}

Here, we utilize the mean-field approximation (MFA)~\cite{NYC12} to derive a set of equations describing the dynamical system.
\begin{theorem}
    Let $c_A$ denote the concentration of positive adoption states and the probability of choosing an agent with a positive adoption state, and $c_S$ -- the concentration of positive opinions and the probability of choosing an agent with a positive opinion. Under the assumptions that these events are independent, and that each layer of the network is of size $N \to \infty$, the dynamics of the system is described in the AND variant by:
    \begin{align}
        \frac{\text{d} c_A}{\text{d} t} &= c_S\left(1-c_A\right)a_1 - \left(1-c_S\right)c_Aha_1, \label{eq:pv_A} \\
        \frac{\text{d} c_S}{\text{d} t} &= \left(1-c_S\right) \left( \frac{1}{2}p + (1-p) c_S^q c_A^q \right) - c_S \left( \frac{1}{2}p + (1-p) (1-c_S)^q (1-c_A)^q \right), \label{eq:pv_Sand}
    \end{align}
    and in the OR variant by:
    \begin{align}
        \frac{\text{d} c_A}{\text{d} t} ={}& c_S\left(1-c_A\right)a_1 - \left(1-c_S\right)c_Aha_1, \nonumber\\
        \frac{\text{d} c_S}{\text{d} t} ={}& (1-c_S) \Bigl\{ \frac{1}{2}p + (1-p) \left( c_S^q \left( 1-c_A^q-(1-c_A)^q \right) \right. \nonumber \\
        & \left. + c_A^q \left( 1-c_S^q-(1-c_S)^q \right) + c_S^q c_A^q \right) \Bigr\} \nonumber \\
        & - c_S \Bigl\{ \frac{1}{2}p + (1-p) \left( (1-c_S)^q \left( 1-c_A^q-(1-c_A)^q \right) \right. \nonumber \\
        & \left. + (1-c_A)^q \left( 1-c_S^q-(1-c_S)^q \right) + (1-c_S)^q (1-c_A)^q \right) \Bigr\}.        
        \label{eq:pv_Sor}
    \end{align}
    \label{the:1}
\end{theorem}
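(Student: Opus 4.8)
The plan is to obtain the equations by the standard mean-field rate-balance argument: compute the expected change of each concentration during one elementary event of Algorithm~\ref{alg:multilayer}, divide by the elementary time step $\Delta t = 1/N$, and let $N\to\infty$, so that the (random) concentrations $c_A,c_S$ self-average and follow a deterministic ODE. The independence assumption enters through the factorization of joint probabilities: a uniformly chosen layer-$1$ neighbour of the active agent has a positive adoption state with probability $c_A$, hence a layer-$1$ group of influence of size $q$ is unanimously positive with probability $c_A^q$ and unanimously negative with probability $(1-c_A)^q$; analogously on layer~$2$ with $c_S$; and the two groups, drawn on different layers, are treated as independent. Boundary agents, being a vanishing fraction as $N\to\infty$, are ignored. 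All three equations then follow by identifying ``gain'' and ``loss'' contributions.

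I would first dispose of $c_A$, which is shared by both variants. The adoption state of the active agent $i$ is modified only in the last block of the elementary event: $A_i:-1\to+1$ exactly when $S_i=+1$ and $A_i=-1$ --- an event of probability $c_S(1-c_A)$, using that $c_S$ is the probability of selecting a positive opinion and $c_A$ that of a positive adoption state --- followed by a success of the bias-$a_1$ coin; and $A_i:+1\to-1$ exactly when $S_i=-1$, $A_i=+1$ (probability $(1-c_S)c_A$), followed by a success of the bias-$a_2=ha_1$ coin. The expected change of the number of positive adopters per elementary event is thus $c_S(1-c_A)a_1-(1-c_S)c_A h a_1$; dividing the corresponding change of $c_A$ by $\Delta t=1/N$ gives \eqref{eq:pv_A}.

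For the opinion in the AND variant: with probability $p$ the agent flips to the opposite value with conditional probability $1/2$; with probability $1-p$ it flips only if $Q_1+Q_2=-2S_i$, which, since $Q_1,Q_2\in[-1,1]$, forces $Q_1=Q_2=-S_i$, i.e.\ \emph{both} groups of influence unanimous and opposite to $S_i$. Hence an agent with $S_i=-1$ flips to $+1$ with probability $\frac{1}{2}p+(1-p)c_S^q c_A^q$ and an agent with $S_i=+1$ flips to $-1$ with probability $\frac{1}{2}p+(1-p)(1-c_S)^q(1-c_A)^q$. Weighting the former by $1-c_S$ and subtracting the latter weighted by $c_S$ produces the right-hand side of \eqref{eq:pv_Sand}.

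The only genuinely delicate step is the opinion equation in the OR variant, where conformity flips the agent when $(Q_1=-S_i$ and $Q_2\neq S_i)$ or $(Q_1\neq S_i$ and $Q_2=-S_i)$. Here I would classify the joint outcome of the two groups by whether each is ``all positive'', ``all negative'', or ``mixed'' (nine cases) and, for an agent with $S_i=-1$, sum the probabilities of the outcomes that trigger a flip to $+1$; by independence a group is mixed with probability $1-c_S^q-(1-c_S)^q$ on layer~$2$ and $1-c_A^q-(1-c_A)^q$ on layer~$1$. Exactly three cases survive --- layer-$2$ group all positive with layer-$1$ group mixed, layer-$1$ group all positive with layer-$2$ group mixed, and both groups all positive --- which together reproduce the bracket multiplying $(1-c_S)$ in \eqref{eq:pv_Sor}; equivalently one applies inclusion--exclusion to the two clauses of the disjunction, whose intersection is the event ``both groups all positive'' (probability $c_S^q c_A^q$), recovering the same expression. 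The mirror computation (replacing each $c_X$ by $1-c_X$) gives the flip-to-$-1$ probability multiplying $c_S$, and appending the independence term $\frac{1}{2}p$ as before finishes the derivation. The bookkeeping here --- ensuring that no joint outcome is double counted or omitted when the two clauses overlap --- is where I expect the main difficulty to lie.
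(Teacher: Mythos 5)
Your proposal is correct and follows essentially the same gain/loss rate-balance derivation as the paper; your three-case enumeration of the OR disjunction (both groups unanimous positive, one unanimous positive with the other mixed) matches the paper's three terms exactly. The only cosmetic difference is that you write the probability of a mixed group directly as $1-c_X^q-(1-c_X)^q$, whereas the paper first writes it as the binomial sum $\sum_{i=1}^{q-1}\binom{q}{i}c_X^i(1-c_X)^{q-i}$ and then collapses it to the same expression.
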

\begin{proof}
    Under our assumptions, we can write down probability $\gamma_A^+$ that the number of agents with positive adoption states will increase by $1$ and probability $\gamma_A^-$ that it will decrease by~$1$:
    \begin{align}
        \gamma_A^+ &= c_S\left(1-c_A\right)a_1, \\
        \gamma_A^- &= \left(1-c_S\right)c_Aa_2.
    \end{align}
    Then, we derive the differential equation, analogously to \cite{KRU17,WER22}:
    \begin{equation}
        \frac{\text{d} c_A}{\text{d} t} = \gamma_A^+ - \gamma_A^-,
    \end{equation}
    which gives us Eq.~(\ref{eq:pv_A}), after replacing $a_2$ with $h a_1$, see Eq.~(\ref{eq:h}). Eq.~(\ref{eq:pv_A}) is common for both AND and OR variants of the model. Similarly, we can write down probabilities $\gamma_S^+$ and $\gamma_S^-$ that a number of agents with positive opinion will increase or decrease by $1$, respectively. These are:
    \begin{align}
        \gamma_{S,AND}^+ &= \left(1-c_S\right) \left( \frac{1}{2}p + (1-p) c_S^q c_A^q \right), \\
        \gamma_{S,AND}^- &= c_S \left( \frac{1}{2}p + (1-p) (1-c_S)^q (1-c_A)^q \right),
    \end{align}
    in the AND variant, and:
    \begin{align}
        \gamma_{S,OR}^+ ={}& (1-c_S) \left\{ \frac{1}{2}p + (1-p) \left( c_S^q \sum_{i=1}^{q-1} \binom{q}{i} c_A^i (1-c_A)^{q-i} \right.\right. \nonumber \\
    & \left.\left. + c_A^q \sum_{i=1}^{q-1} \binom{q}{i} c_S^i (1-c_S)^{q-i} + c_S^q c_A^q \right) \right\}, \\
        \gamma_{S,OR}^- ={}& c_S \left\{ \frac{1}{2}p + (1-p) \left( (1-c_S)^q \sum_{i=1}^{q-1} \binom{q}{i} (1-c_A)^i c_A^{q-i} \right.\right. \nonumber \\
    & \left.\left. + (1-c_A)^q \sum_{i=1}^{q-1} \binom{q}{i} (1-c_S)^i c_S^{q-i} + (1-c_S)^q (1-c_A)^q \right) \right\},
    \end{align}
    in the OR variant. Using the binomial formula, we can replace sums in $\gamma_{S,OR}^+$ and $\gamma_{S,OR}^-$ with:
    \begin{equation}
        \sum_{i=1}^{q-1} \binom{q}{i} c_X^i (1-c_X)^{q-i} = \sum_{i=1}^{q-1} \binom{q}{i} (1-c_X)^i c_X^{q-i} = 1 - (1-c_X)^q - c_X^q, \,\,\,\, X=A,S.
    \end{equation}
    Then:
    \begin{equation}
        \frac{\text{d} c_A}{\text{d} t} = \gamma_{S,AND}^+ - \gamma_{S,AND}^-
    \end{equation}
    gives us Eq.~(\ref{eq:pv_Sand}), and
    \begin{equation}
        \frac{\text{d} c_A}{\text{d} t} = \gamma_{S,OR}^+ - \gamma_{S,OR}^-
    \end{equation}
    gives Eq.~(\ref{eq:pv_Sor}).
\end{proof}

Due to the power of $q$ in Eqs.~(\ref{eq:pv_Sand})-(\ref{eq:pv_Sor}), time trajectories of the system given by Eqs.~(\ref{eq:pv_A})-(\ref{eq:pv_Sor}) cannot be determined analytically. Hence, one must resort to numerical methods, which we show in Section~\ref{sec:result}. However, stationary states can be found analytically and their existence can be proven.
\begin{theorem}
    Let $c_A$ denote the concentration of positive adoption states and the probability of choosing an agent with a positive adoption state, and $c_S$ -- the concentration of positive opinions and the probability of choosing an agent with a positive opinion. Under the assumptions that these events are independent, and that each layer of the network is of size $N \to \infty$, for any $q \in \mathbb{N}$, $q \geq 2$, $p \in [0,1]$, $h \in (0,1)$ and $a_1 \in (0,1]$, there always exists at least one stationary state, and all stationary states must satisfy:
    \begin{align}
        c_A &= \frac{c_S}{c_S+h-c_Sh}, \label{eq:stat_A}\\
        p &= \frac{f_2(c_S)}{f_2(c_S) + f_3(c_S)},
    \end{align}
    in the AND variant, and:
    \begin{align}
        c_A &= \frac{c_S}{c_S+h-c_Sh}, \nonumber \\
        p &= \frac{f_5(c_S)}{f_5(c_S) + f_3(c_S)}, \label{eq:stat_Sor}
    \end{align}
    in the OR variant. Here:
    \begin{align*}
        f_2(c_S) ={}& c_S(1-c_S) \left( c_S^{2q-1} - h^q (1-c_S)^{2q-1} \right), \\
        f_3(c_S) ={}& \left( c_S - \frac{1}{2} \right)(c_S+h-c_Sh)^q, \\
        f_5(c_S) ={}& c_S(1-c_S) \Bigl\{ \left( c_S^{q-1} - (1 - c_S)^{q-1} \right)(c_S+h-c_Sh)^q \\
        & + c_S^{q-1}(1-c_S)^{q-1}(1+h^q)(2c_S-1) \\
        & + c_S^{q-1} - c_S^{2q-1} + h^q(1-c_S)^{2q-1} - h^q(1-c_S)^{q-1} \Bigr\}.
    \end{align*}
\end{theorem}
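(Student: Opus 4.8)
The plan is to identify the stationary states with the common zeros in $[0,1]^2$ of the right-hand sides of Eqs.~(\ref{eq:pv_A})--(\ref{eq:pv_Sor}) from Theorem~\ref{the:1}, to reduce the system to a single scalar equation in $c_S$, and then to extract existence from the intermediate value theorem.

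First I would set the right-hand side of Eq.~(\ref{eq:pv_A}) equal to zero. Since $a_1\in(0,1]$, cancelling $a_1$ leaves $c_S(1-c_A)=(1-c_S)c_Ah$, and solving for $c_A$ gives $c_A=c_S/(c_S+h-c_Sh)$, i.e. Eq.~(\ref{eq:stat_A}). Writing $D:=c_S+h-c_Sh=h+(1-h)c_S$, one notes that $D\in[h,1]$ whenever $h\in(0,1)$ and $c_S\in[0,1]$, so $D$ never vanishes and $c_A=c_S/D\in[0,1]$; this relation is common to both variants because Eq.~(\ref{eq:pv_A}) is.

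Next I would impose $\mathrm{d}c_S/\mathrm{d}t=0$ along this constraint. In either variant the $c_S$-equation can be written as $\tfrac12 p(1-2c_S)+(1-p)\,B(c_S,c_A)$, where $B$ is the coefficient of $(1-p)$, namely $B=(1-c_S)c_S^q c_A^q-c_S(1-c_S)^q(1-c_A)^q$ in the AND variant and the corresponding longer polynomial collected from $\gamma_{S,OR}^+-\gamma_{S,OR}^-$ in the OR variant. Solving this linear relation for $p$ yields $p=B/\bigl(B+(c_S-\tfrac12)\bigr)$. I would then substitute $c_A=c_S/D$ and $1-c_A=h(1-c_S)/D$, so that $c_A^q=c_S^q/D^q$ and $(1-c_A)^q=h^q(1-c_S)^q/D^q$, and multiply numerator and denominator by $D^q$. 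The claim is that $D^q B$ collapses to $f_2(c_S)$ in the AND variant and to $f_5(c_S)$ in the OR variant, while $D^q(c_S-\tfrac12)=f_3(c_S)$; this delivers $p=f_2/(f_2+f_3)$ and $p=f_5/(f_5+f_3)$ respectively. The main obstacle is the OR simplification: one must expand the three bracketed terms in $\gamma_{S,OR}^\pm$, insert the constraint, factor out $c_S(1-c_S)$, group the two $(1+h^q)$ contributions into the single $(2c_S-1)$ term of $f_5$, and collect the remaining monomials — routine but lengthy. One should also check that $f_2+f_3$ (resp. $f_5+f_3$) cannot vanish at a stationary point, so that the quotient is well defined: if the denominator vanished, the stationarity equation would force $c_S=\tfrac12$ together with $B=0$, which is impossible since $c_A\neq\tfrac12$ when $h\in(0,1)$.

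Finally, for existence I would study the scalar map $\Phi(c_S):=\mathrm{d}c_S/\mathrm{d}t$ after eliminating $c_A=c_S/D(c_S)$. Because $D$ is bounded away from $0$ on $[0,1]$, $\Phi$ is continuous there. At $c_S=0$ every $(1-p)$-term carries a factor $c_S^q$ and the $c_S\{\dots\}$ block vanishes, so $\Phi(0)=\tfrac12 p\ge 0$; at $c_S=1$ every $(1-p)$-term carries a factor $(1-c_S)^q$ and the $(1-c_S)\{\dots\}$ block vanishes, so $\Phi(1)=-\tfrac12 p\le 0$. If $p>0$ these inequalities are strict and the intermediate value theorem produces $c_S^\ast\in(0,1)$ with $\Phi(c_S^\ast)=0$; pairing it with $c_A^\ast=c_S^\ast/D(c_S^\ast)$ gives a stationary state of the full system by construction. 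If $p=0$ then $\Phi(0)=0$, so $(c_S,c_A)=(0,0)$ is stationary. Hence for every admissible $q,p,h,a_1$ at least one stationary state exists, and by the computation above every stationary state satisfies the stated equations.
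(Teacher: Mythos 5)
Your reduction is the same as the paper's: cancel $a_1$ to get $c_A=c_S/(c_S+h-c_Sh)$, write $\mathrm{d}c_S/\mathrm{d}t=\tfrac12 p(1-2c_S)+(1-p)B$, solve for $p$, substitute $1-c_A=h(1-c_S)/D$ and clear $D^q$ to land on $p=f_2/(f_2+f_3)$ and $p=f_5/(f_5+f_3)$ (the OR expansion you defer is exactly the ``analogously, we get'' step in the paper, so you are no less explicit than the original). Where you genuinely diverge is the existence argument and the non-vanishing of the denominator. The paper shows $f_2>0$ (and, via a separate Lemma, $f_5>0$) on all of $\left[\tfrac12,1\right)$, so that $f_4=f_2/(f_2+f_3)$ and $f_6=f_5/(f_5+f_3)$ are continuous on $\left[\tfrac12,1\right]$ with values $1$ at $c_S=\tfrac12$ and $0$ at $c_S=1$, and then invokes the intermediate value theorem on these maps to hit every $p\in[0,1]$. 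You instead apply the intermediate value theorem directly to the reduced vector field $\Phi(c_S)$, using $\Phi(0)=\tfrac12 p\ge 0$ and $\Phi(1)=-\tfrac12 p\le 0$ (with the degenerate case $p=0$ handled by the fixed point $(0,0)$); this is more elementary and completely bypasses the paper's Lemma~1, since for well-definedness of the quotient you only need to exclude the single configuration $c_S=\tfrac12$, $B=0$, rather than positivity of $f_5$ on a whole interval. The one spot to tighten: your claim that $B=0$ at $c_S=\tfrac12$ is impossible ``since $c_A\neq\tfrac12$'' is immediate in the AND variant, where $B=2^{-(q+1)}\bigl(c_A^q-(1-c_A)^q\bigr)$, but in the OR variant it requires the short extra computation that $B$ at $c_S=\tfrac12$ equals $\tfrac12\bigl(1-2^{-q}\bigr)\bigl(c_A^q-(1-c_A)^q\bigr)$, after which the same conclusion follows from $c_A=1/(1+h)>\tfrac12$. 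With that line added, your argument is complete and, in the existence part, cleaner than the paper's.
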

\begin{proof}
    A stationary state must satisfy:
    \begin{equation}
        \frac{\text{d} c_A}{\text{d} t} = 0, \,\,\,\, \frac{\text{d} c_S}{\text{d} t} = 0.
    \end{equation}
    First, let us consider Eq.~(\ref{eq:pv_A}), common for both AND and OR variants of the model:
    \begin{equation}
        0 = c_S(1-c_A)a_1 - (1-c_S)c_Aha_1.
    \end{equation}
    Since $a_1 > 0$, we can rewrite it as:
    \begin{equation}
        c_S = c_A(c_S+h-c_Sh).
    \end{equation}
    Because $c_S \in [0,1]$ and $h \in (0,1)$, $(c_S+h-c_Sh) > 0$ always, and hence:
    \begin{equation}
        c_A = \frac{c_S}{c_S+h-c_Sh} = \frac{c_S}{(1-h)c_S+h} = f_1(c_S). \label{eq:f1}
    \end{equation}
    Note, that function $f_1(c_S)$ is a rational one of degree $1$. Therefore, for any $c_S \in [0,1]$, there always exists exactly one $c_A \in [0,1]$ that satisfies Eq.~(\ref{eq:f1}).

    Now, let us consider Eq.~(\ref{eq:pv_Sand}) from the AND variant:
    \begin{equation}
        0 = \left(1-c_S\right) \left( \frac{1}{2}p + (1-p) c_S^q c_A^q \right) - c_S \left( \frac{1}{2}p + (1-p) (1-c_S)^q (1-c_A)^q \right).
    \end{equation}
    By expanding the brackets and putting all the terms containing $p$ on one side, we arrive at:
    \begin{align}
        p \left( c_S^q c_A^q - c_S^{q+1}c_A^q - c_S(1-c_S)^q(1-c_A)^q + c_S - \frac{1}{2} \right) = \nonumber \\
        c_S^q c_A^q - c_S^{q+1}c_A^q - c_S(1-c_S)^q(1-c_A)^q.
    \end{align}
    Next, by replacing $c_A$ with $f_1(c_S)$ and multiplying all the terms by $(c_S+h-c_Sh)^q$, we get:
    \begin{align}
        p \left\{ c_S(1-c_S) \left( c_S^{2q-1} - h^q (1-c_S)^{2q-1} \right) + \left( c_S - \frac{1}{2} \right)(c_S+h-c_Sh)^q \right\} = \nonumber \\
        c_S(1-c_S) \left( c_S^{2q-1} - h^q (1-c_S)^{2q-1} \right).
    \end{align}
    Then, denoting:
    \begin{align}
        f_2(c_S) &= c_S(1-c_S) \left( c_S^{2q-1} - h^q (1-c_S)^{2q-1} \right), \label{eq:f2} \\
        f_3(c_S) &= \left( c_S - \frac{1}{2} \right)(c_S+h-c_Sh)^q, \label{eq:f3}
    \end{align}
    we obtain:
    \begin{equation}
        p \left( f_2(c_S) + f_3(c_S) \right) = f_2(c_S). \label{eq:f2f3}
    \end{equation}
    If $f_2(c_S) + f_3(c_S)=0$, then $f_2(c_S)=0$, and hence $f_3(c_S)=0$ as well. Function $f_3(c_S)=0$ only if $c_S=\frac{1}{2}$, as $(c_S+h-c_Sh) > 0$. For $c_S \in \left[ \frac{1}{2},1 \right)$, function $f_2(c_S)$:
    \begin{equation}
        f_2(c_S) = \underbrace{c_S(1-c_S)}_{> 0} \underbrace{\left( c_S^{2q-1} - h^q (1-c_S)^{2q-1} \right)}_{> 0} > 0, \label{eq:f205}
    \end{equation}
    because for $c_S \in \left[ \frac{1}{2},1 \right)$, $c_S \geq 1-c_S$, and $h \in (0,1)$. Hence, there is no $c_S$ for which $f_2(c_S)=f_3(c_S)=0$, and therefore no $c_S$ for which $f_2(c_S) + f_3(c_S)=0$ that satisfies Eq.~(\ref{eq:f2f3}). As further we consider only $f_2(c_S) + f_3(c_S) \neq 0$, we can write:
    \begin{equation}
        p = \frac{f_2(c_S)}{f_2(c_S) + f_3(c_S)} = f_4(c_S). \label{eq:f4}
    \end{equation}
    From Eqs.~(\ref{eq:f2})-(\ref{eq:f4}), we can notice that:
    \begin{align}
        &\text{for} \,\,\,\, c_S = 1: f_2(c_S) = 0, \,\, f_3(c_S) > 0 \implies f_4(c_S)=0 \nonumber \\
        &\text{for} \,\,\,\, c_S = \frac{1}{2}: f_2(c_S) > 0, \,\, f_3(c_S)=0 \implies f_4(c_S)=1, \nonumber \\
        &\text{for} \,\,\,\, c_S \in \left( \frac{1}{2}, 1 \right): f_2(c_S) > 0, \,\, f_3(c_S) > 0 \implies f_4(c_S) \in (0,1).
    \end{align}    
    When $c_S \in \left[ \frac{1}{2}, 1 \right]$, $f_4(c_S)$ is a rational function with a positive denominator, and hence continuous within that interval. As it is continuous in $\left[ \frac{1}{2},1 \right]$, $f_4(1)=0$ and $f_4\left(\frac{1}{2}\right)=1$, it achieves any value between $0$ and $1$. Hence, for any $p \in [0,1]$ there always exists at least one $c_S \in [0,1]$ that satisfies Eq.~(\ref{eq:f4}). Therefore, in the AND variant, there always exists at least one stationary state. 

    Similarly in the OR variant, we have:
    \begin{align}
        0 ={}& (1-c_S) \Bigl\{ \frac{1}{2}p + (1-p) \left( c_S^q \left( 1-c_A^q-(1-c_A)^q \right) \right. \nonumber \\
        & \left. + c_A^q \left( 1-c_S^q-(1-c_S)^q \right) + c_S^q c_A^q \right) \Bigr\} \nonumber \\
        & - c_S \Bigl\{ \frac{1}{2}p + (1-p) \left( (1-c_S)^q \left( 1-c_A^q-(1-c_A)^q \right) \right. \nonumber \\
        & \left. + (1-c_A)^q \left( 1-c_S^q-(1-c_S)^q \right) + (1-c_S)^q (1-c_A)^q \right) \Bigr\},
    \end{align}  
    from Eq.~(\ref{eq:pv_Sor}). Analogously to the AND variant, by expanding the brackets, putting all the terms containing $p$ on one side, then, by replacing $c_A$ with $f_1(c_S)$ and multiplying all the terms by $(c_S+h-c_Sh)^q$, we get:
    \begin{equation}
        p \left( f_5(c_S) + f_3(c_S) \right) = f_5(c_S), \label{eq:f5f3}
    \end{equation}
    where $f_3(c_S)$ is given by Eq.~(\ref{eq:f3}) and
    \begin{align}
        f_5(c_S) ={}& c_S(1-c_S) \Bigl\{ \left( c_S^{q-1} - (1 - c_S)^{q-1} \right)(c_S+h-c_Sh)^q \nonumber \\
        & + c_S^{q-1}(1-c_S)^{q-1}(1+h^q)(2c_S-1) \nonumber \\
        & + c_S^{q-1} - c_S^{2q-1} + h^q(1-c_S)^{2q-1} - h^q(1-c_S)^{q-1} \Bigr\}.
    \end{align}
    Again, if $f_5(c_S) + f_3(c_S)=0$, then $f_5(c_S)=0$, and hence $f_3(c_S)=0$ as well. Function $f_3(c_S)=0$ only if $c_S=\frac{1}{2}$. For $c_S \in \left[ \frac{1}{2},1 \right)$, function $f_5(c_S)$:
    \begin{align}
        f_5(c_S) ={}& \underbrace{c_S(1-c_S)}_{> 0} \Bigl\{ \underbrace{\left( c_S^{q-1} - (1 - c_S)^{q-1} \right)}_{\geq 0} \underbrace{(c_S+h-c_Sh)^q}_{> 0} \nonumber \\
        & + \underbrace{c_S^{q-1}(1-c_S)^{q-1}(1+h^q)(2c_S-1)}_{\geq 0} \nonumber \\
        & + \underbrace{c_S^{q-1} - c_S^{2q-1} + h^q(1-c_S)^{2q-1} - h^q(1-c_S)^{q-1}}_{> 0, \,\, \text{by Lemma \ref{lemma:1}}} \Bigr\} > 0,
    \end{align}
    because for $c_S \in \left[ \frac{1}{2},1 \right)$, $c_S \geq 1-c_S$, and $h \in (0,1)$. Lemma~\ref{lemma:1} is presented below this proof. Hence, there is no $c_S$ for which $f_5(c_S)=f_3(c_S)=0$, and therefore no $c_S$ for which $f_5(c_S) + f_3(c_S)=0$ that satisfies Eq.~(\ref{eq:f5f3}). As further we consider only $f_5(c_S) + f_3(c_S) \neq 0$, we can write:
    \begin{equation}
        p = \frac{f_5(c_S)}{f_5(c_S) + f_3(c_S)} = f_6(c_S). \label{eq:f6}
    \end{equation}
    From Eqs.~(\ref{eq:f5f3})-(\ref{eq:f6}), we can notice that:
    \begin{align}
        &\text{for} \,\,\,\, c_S = 1: f_5(c_S) = 0, \,\, f_3(c_S) > 0 \implies f_6(c_S)=0 \nonumber \\
        &\text{for} \,\,\,\, c_S = \frac{1}{2}: f_5(c_S) > 0, \,\, f_3(c_S)=0 \implies f_6(c_S)=1, \nonumber \\
        &\text{for} \,\,\,\, c_S \in \left( \frac{1}{2}, 1 \right): f_5(c_S) > 0, \,\, f_3(c_S) > 0 \implies f_6(c_S) \in (0,1).
    \end{align}    
    When $c_S \in \left[ \frac{1}{2}, 1 \right]$, $f_6(c_S)$ is a rational function with positive denominator, and hence continuous within that interval. As it is continuous in $\left[ \frac{1}{2},1 \right]$, $f_6(1)=0$ and $f_6\left(\frac{1}{2}\right)=1$, it achieves any value between $0$ and $1$. Hence, for any $p \in [0,1]$ there always exists at least one $c_S \in [0,1]$ that satisfies Eq.~(\ref{eq:f6}). Therefore, in the OR variant, there always exists at least one stationary state. 
\end{proof}
\begin{lemma}
    Let $c_S \in \left[ \frac{1}{2},1 \right)$, $h \in (0,1)$ and $q \in \mathbb{N}$, $q \geq 2$. Then:
    \begin{equation}
        c_S^{q-1} - c_S^{2q-1} + h^q(1-c_S)^{2q-1} - h^q(1-c_S)^{q-1} > 0.
    \end{equation}
    \label{lemma:1}
\end{lemma}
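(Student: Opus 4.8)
I propose the following approach.

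\medskip

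\noindent\textbf{Reduction.} Set $x=c_S$ and $y=1-c_S$, so that $x+y=1$, $x\in[\tfrac12,1)$, $y\in(0,\tfrac12]$ and $x\ge y$. Grouping the four terms in the obvious way, the quantity to be bounded becomes
\[
c_S^{q-1}-c_S^{2q-1}+h^q(1-c_S)^{2q-1}-h^q(1-c_S)^{q-1}
= x^{q-1}\bigl(1-x^q\bigr)-h^q\,y^{q-1}\bigl(1-y^q\bigr).
\]
Adding and subtracting $y^{q-1}(1-y^q)$ rewrites this as
\[
\Bigl[\,x^{q-1}\bigl(1-x^q\bigr)-y^{q-1}\bigl(1-y^q\bigr)\,\Bigr]+\bigl(1-h^q\bigr)\,y^{q-1}\bigl(1-y^q\bigr).
\]
Since $0<h<1$ gives $1-h^q>0$, and since $y\in(0,\tfrac12]$ gives $y^{q-1}>0$ and $1-y^q\ge 1-2^{-q}>0$, the second summand is strictly positive. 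Hence it suffices to show the bracket is nonnegative, i.e.
\[
x^{q-1}\bigl(1-x^q\bigr)\ge y^{q-1}\bigl(1-y^q\bigr),\qquad\text{equivalently}\qquad x^{q-1}-y^{q-1}\ \ge\ x^{2q-1}-y^{2q-1}.
\]

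\medskip

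\noindent\textbf{Auxiliary monotonicity.} I would prove the claim: for $x+y=1$ with $1\ge x\ge y\ge 0$, the sequence $d_m:=x^m-y^m$ is non-increasing for $m\ge 1$. Indeed, using $1-x=y$ and $1-y=x$,
\[
d_m-d_{m+1}=x^m(1-x)-y^m(1-y)=x^m y-y^m x=xy\bigl(x^{m-1}-y^{m-1}\bigr)\ge 0,
\]
because $xy\ge 0$ and $x^{m-1}\ge y^{m-1}$ for $m\ge 1$. Applying this from $m=q-1$ (valid since $q\ge 2$, so $q-1\ge 1$) up to $m=2q-1$ gives $d_{q-1}\ge d_{2q-1}$, which is exactly the inequality needed above. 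Combined with the strictly positive summand, this yields that the original expression is $>0$, proving Lemma~\ref{lemma:1}.

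\medskip

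\noindent\textbf{Where the care is needed.} There is no deep obstacle; the proof is a short reduction plus the one-line telescoping estimate for $d_m$. The only points requiring attention are the bookkeeping that recasts the target as (nonnegative)\,$+$\,(strictly positive) — in particular verifying $y^{q-1}>0$ and $1-y^q>0$ throughout the half-open range $c_S\in[\tfrac12,1)$ — and the fact that $q\ge 2$ is used twice: to make $y^{q-1}$ nonconstant-safe and, more essentially, to guarantee $q-1\ge 1$ so the monotone chain $d_{q-1}\ge d_q\ge\cdots\ge d_{2q-1}$ is legitimate. The boundary value $c_S=\tfrac12$, at which the bracket vanishes, is still handled because the surviving term $(1-h^q)\,y^{q-1}(1-y^q)$ is strictly positive there.
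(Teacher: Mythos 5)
Your proof is correct. Both arguments ultimately hinge on the same inequality $d_{q-1}\ge d_{2q-1}$ for $d_m:=c_S^m-(1-c_S)^m$, but you get there differently. The paper handles the $h$-dependence multiplicatively: it writes the expression as $c_S^{q-1}-c_S^{2q-1}-h^q\left((1-c_S)^{q-1}-(1-c_S)^{2q-1}\right)$, uses $h^q<1$ to obtain a strict bound, and then proves $d_{q-1}\ge d_{2q-1}$ by a two-step chain specific to these exponents: first $d_{2q-1}\le d_{2q-2}$ (reduced to $c_S^{2q-3}\ge(1-c_S)^{2q-3}$), then the factorization $d_{q-1}-d_{2q-2}=\left(c_S^{q-1}-(1-c_S)^{q-1}\right)\left(1-c_S^{q-1}-(1-c_S)^{q-1}\right)\ge 0$. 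You instead split off the $h$-dependence additively, writing the expression as $\left[d_{q-1}-d_{2q-1}\right]+(1-h^q)(1-c_S)^{q-1}\left(1-(1-c_S)^q\right)$, which isolates the strict positivity in a single manifestly positive term, and you establish $d_{q-1}\ge d_{2q-1}$ via the general telescoping identity $d_m-d_{m+1}=c_S(1-c_S)\left(c_S^{m-1}-(1-c_S)^{m-1}\right)\ge 0$ valid for all $m\ge 1$. Your monotonicity lemma is more general (it gives the full non-increasing chain $d_{q-1}\ge d_q\ge\cdots\ge d_{2q-1}$ rather than just the two endpoints) and arguably more transparent, at the cost of introducing the change of variables; the paper's version is self-contained but relies on an exponent-specific factorization that does not generalize as cleanly.
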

\begin{proof}
    First, we note that for $c_S \in \left[ \frac{1}{2},1 \right)$:
    \begin{align}
        c_S^{2q-2} - (1-c_S)^{2q-2} &\geq c_S^{2q-1} - (1-c_S)^{2q-1} \nonumber \\
        c_S^{2q-2} - c_S^{2q-1} &\geq (1-c_S)^{2q-2} - (1-c_S)^{2q-1} \nonumber \\
        c_S^{2q-2}(1-c_S) &\geq (1-c_S)^{2q-2}(1-(1-c_S)) \nonumber \\
        c_S^{2q-3} &\geq (1-c_S)^{2q-3} \label{eq:2q3}
    \end{align}
    is true, because $c_S \geq 1-c_S$ and $2q-3 \geq 1$. Then:
    \begin{align}
        & c_S^{q-1} - c_S^{2q-1} + h^q(1-c_S)^{2q-1} - h^q(1-c_S)^{q-1} = \nonumber \\
        & c_S^{q-1} - c_S^{2q-1} - h^q \underbrace{\left( (1-c_S)^{q-1} - (1-c_S)^{2q-1} \right)}_{> 0} \stackrel{h < 1}{>} \nonumber \\
        & c_S^{q-1} - c_S^{2q-1} - \left( (1-c_S)^{q-1} - (1-c_S)^{2q-1} \right) = \nonumber \\
        & c_S^{q-1} - (1-c_S)^{q-1} - \left( c_S^{2q-1} - (1-c_S)^{2q-1} \right) \stackrel{\text{Eq.~(\ref{eq:2q3})}}{\geq} \nonumber \\
        & c_S^{q-1} - (1-c_S)^{q-1} - \left( c_S^{2q-2} - (1-c_S)^{2q-2} \right) = \nonumber \\
        & c_S^{q-1} - (1-c_S)^{q-1} - \left( c_S^{q-1} - (1-c_S)^{q-1} \right) \left( c_S^{q-1} + (1-c_S)^{q-1} \right) = \nonumber \\
        & \underbrace{\left( c_S^{q-1} - (1-c_S)^{q-1} \right)}_{\geq 0} \underbrace{\left( 1 - c_S^{q-1} - (1-c_S)^{q-1} \right)}_{\geq 0} \geq 0.
    \end{align}
\end{proof}

\section{Results}
\label{sec:result}

\subsection{Simulation Details}
\label{ssec:res_dim_det}

First, we examine our model with Monte Carlo computer simulations. In order to measure the outcomes at the macroscopic level, we use concentrations, i.e. fractions, of positive adoption states $c_A$ and positive opinions $c_S$, see Definition~\ref{def:concentration}. In all the simulations, we study a system of size $N=2500$ (number of agents). 
As mentioned in the Introduction, the first layer of the two-layer network is always a Square Lattice with Moore's neighborhood \cite{MOO62}, and no periodic boundary conditions, which means that almost every agent has exactly 8 neighbors (except those on the edges of the lattice). We denote it by SL($N=2500$, $m=1$), where $m=1$ indicates the range of a neighborhood on a square grid. The second layer is always a two-dimensional Watts-Strogatz graph \cite{BAT22}, i.e. a Square Lattice with Moore's neighborhood before rewiring and rewiring probability $\beta=0.2$, and again, without periodic boundary conditions. 
We denote it by WS2D($N=2500$, $m=1$, $\beta=0.2$). We study the behavior of the system with respect to the parameters $p$, $a_1$ and $a_2$. 
%
%
Although the size of the group of influence $q$ is a~parameter as well, here we keep it constant, as its impact in the $q$-voter model with independence is well studied \cite{NYC12}. We choose $q=4$ motivated by real life experiments~\cite{ASC55}. This will help us to reduce computational efforts. In all the simulations, we start from a fully unadopted, negative state, i.e., $c_A(0)=0$, $c_S(0)=0$. Lastly, for illustrative purposes, the time horizon $T$ for each simulation is limited to $5000$ Monte Carlo steps (MCS), and the number of time trajectories to $10$. 

\subsection{Simulation Model}
\label{ssec:res_sim}

\begin{figure}
    \centering
    \includegraphics[width=0.6\textwidth]{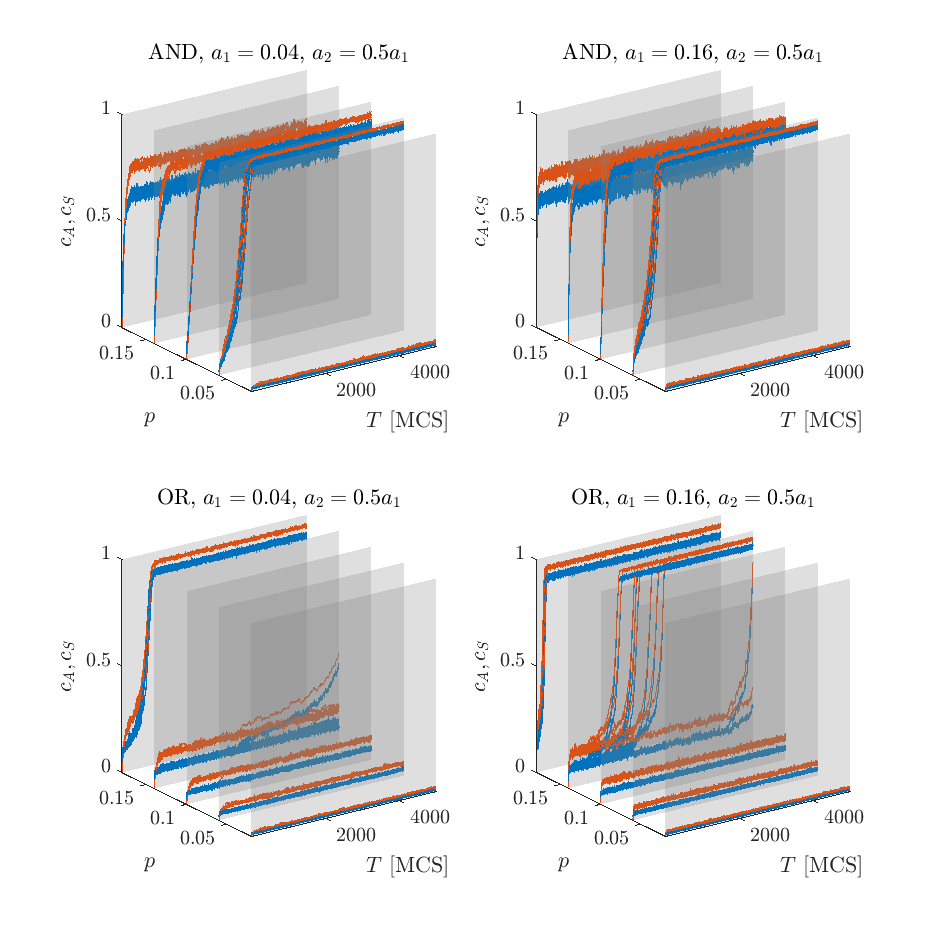}
    \caption{10 simulated time trajectories of $c_A$ (red) and $c_S$ (blue) for different values of $p$: the AND (top), the OR variant (bottom). Values of $a_1$: $a_1=0.04$ (left), $a_1=0.16$ (right). First layer -- SL(N,1), second layer -- WS2D(N,1,0.2), size $N=2500$ and $a_2=0.5a_1$ in all cases.}
    \label{fig:sim2}
\end{figure}

Let us begin with Fig.~\ref{fig:sim2}. It shows sample evolutions of the system, for both variants (AND and OR), and a range of values of $p$, $a_1$ and $a_2$, where $a_2=0.5a_1$ in all cases. We can observe that several final outcomes are possible. When $p$ is low, we retain an unadopted state ($c_A(T) \approx 0$, $c_S(T) \approx 0$). None of the agents have a positive opinion or adoption state, except for a few ``rebels''. Then, there is a critical value of $p$, above which the system becomes adopted ($c_A(T) \approx 1$, $c_S(T) \approx 1$). If the value of $p$ is high, independence surpasses conformity and we end up with a disordered system ($c_A(T) \approx 0.5$, $c_S(T) \approx 0.5$). There is a visible difference between the two variants: AND and OR. In the first, much lower values of $p$ are required to enter adopted or disordered state than in the latter. Finally, the value of $a_1$ itself has no impact on the final state (except for $a_1=0$, which is not shown here), only on the time needed to reach it, with the higher value speeding up the process. 

\begin{figure}
    \centering
    \includegraphics[width=0.6\textwidth]{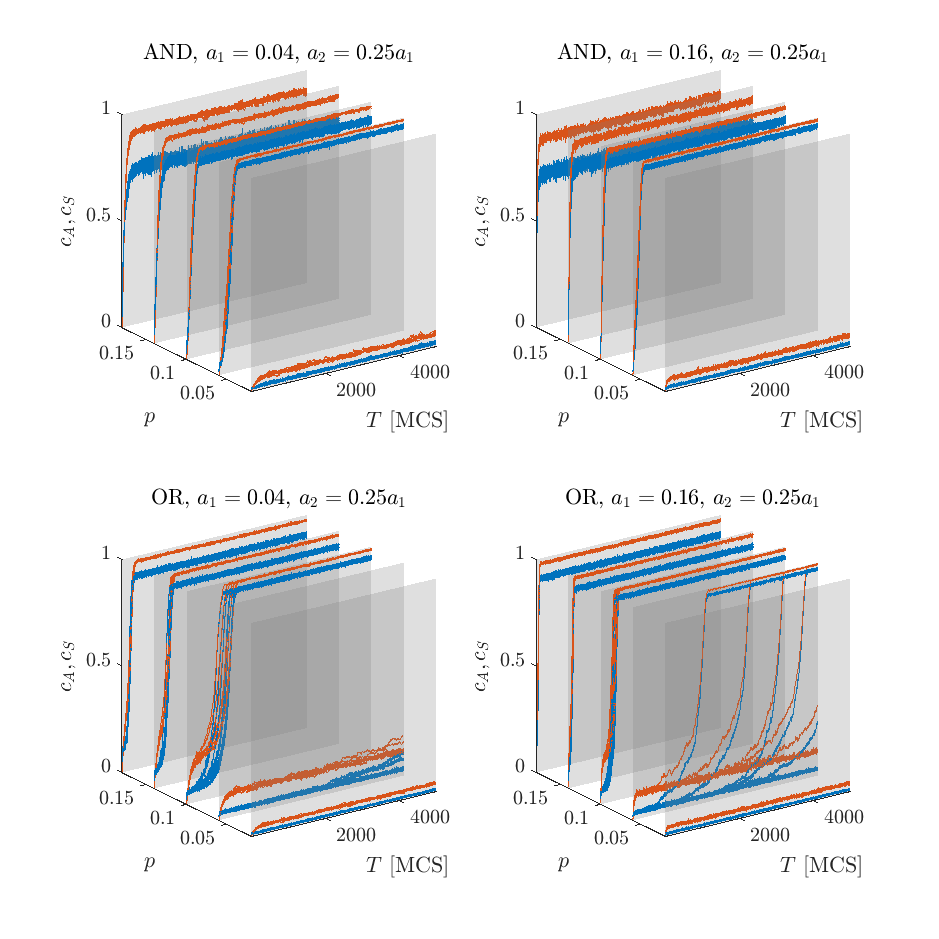}
    \caption{10 simulated time trajectories of $c_A$ (red) and $c_S$ (blue) for different values of $p$: the AND (top), the OR variant (bottom). Values of $a_1$: $a_1=0.04$ (left), $a_1=0.16$ (right). First layer -- SL(N,1), second layer -- WS2D(N,1,0.2), size $N=2500$ and $a_2=0.25a_1$ in all cases.}
    \label{fig:sim4}
\end{figure}

What affects the final state, however, is the relationship between $a_1$ and $a_2$, namely the coefficient $h$, see Eq.~(\ref{eq:h}). In Fig.~\ref{fig:sim4}, $a_2$ is twice smaller than before, i.e. $a_2=0.25a_1$. With this decrease in $a_2$ much lower values of $p$ are needed, in both variants, for the system to become adopted. This itself is a rather trivial conclusion, as now the probability of loosing adoption is four times smaller that the one of adopting. There is, however, a secondary effect to it, due to the fact how the first layer of the network impacts opinions in our model. Positive adoption states support positive opinions (and vice versa). For this reason, the adopted state is not only easier to reach, but also more difficult to disorder. 

We decided not to examine a symmetrical case $a_1=a_2$, i.e. $h=1$, because there is no adopted state there. From $c_A(0)=0$, $c_S(0)=0$, the system can only evolve into total disorder or remain unadopted.


\subsection{Mean-Field Approximation}
\label{ssec:res_mfa}

Monte Carlo computer simulations are the first choice when it comes to agent-based models. Unfortunately, as mentioned in the Introduction, they can be very time-consuming. For this reason, in Theorem~\ref{the:1} we derived a set of equations describing dynamics of the system, see Eqs.~(\ref{eq:pv_A})-(\ref{eq:pv_Sor}), under certain assumptions. Thanks to this approach, we can examine a wide range of parameter values, which we could not achieve with Monte Carlo simulations in any reasonable amount of time.

\begin{figure}
    \centering
    \includegraphics[width=0.6\textwidth]{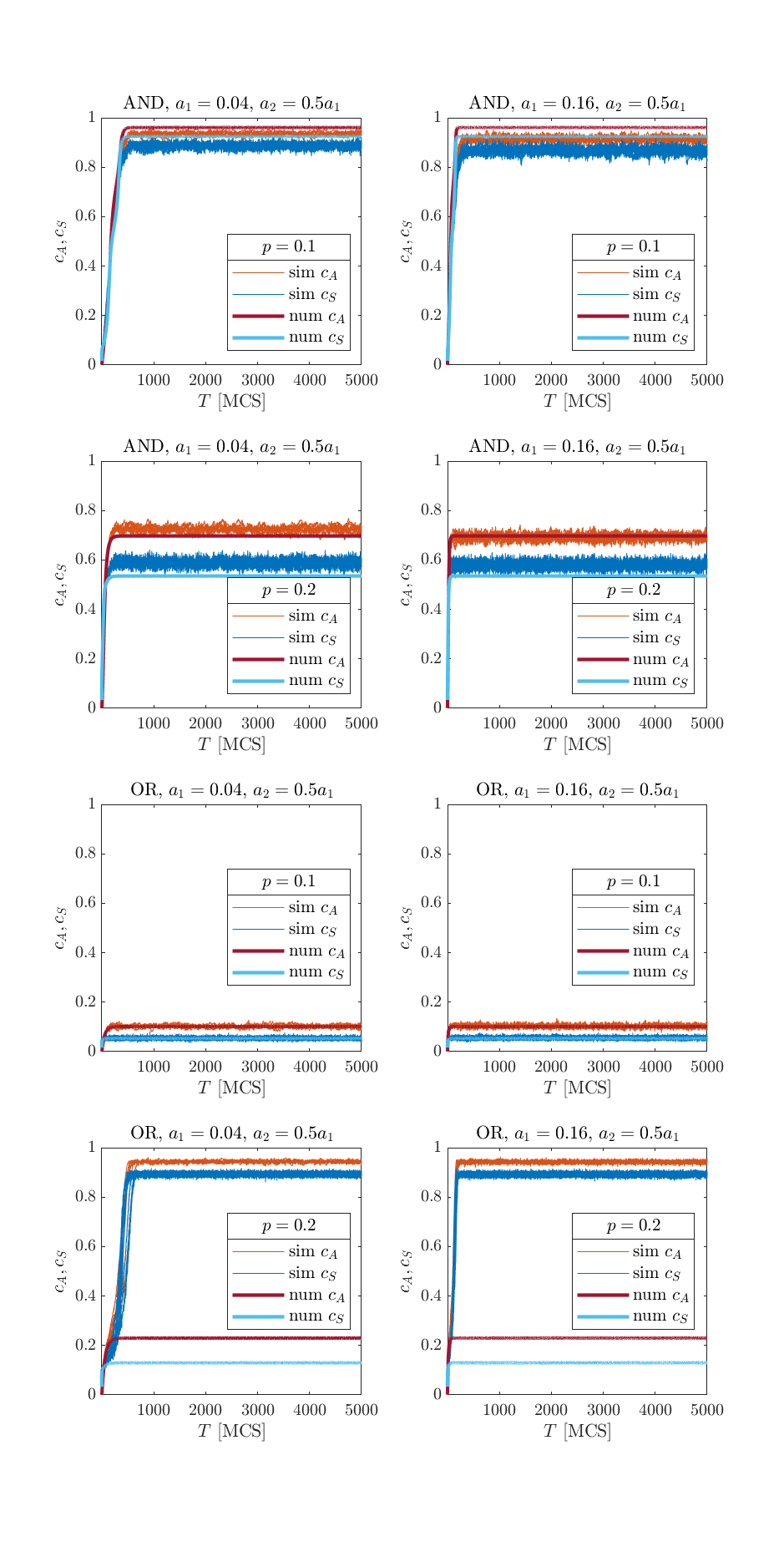}
    \caption{10 simulated time trajectories of $c_A$ and $c_S$ versus numerically obtained time trajectory from Eqs.~(\ref{eq:pv_A})-(\ref{eq:pv_Sor}), for different values of $p$ and $a_1$. The AND (top 4) and the OR variant (bottom 4). First layer -- SL(N,1), second layer -- WS2D(N,1,0.2), size $N=2500$ and $a_2=0.5a_1$ in all cases.}
    \label{fig:sim_num}
\end{figure}

Before discussing the analytical results themselves, we compare them with the simulations. In Fig.~\ref{fig:sim_num}, we combine the latter and numerically obtained time trajectories from Eqs.~(\ref{eq:pv_A})-(\ref{eq:pv_Sor}). As shown there, mean-field gives a fairly good approximation, except for the bottom row (OR variant, $p=0.2$). This is due to the fact, that in the MFA critical values of $p$ required for the system to adopt are slightly higher, as presented in Figs.~\ref{fig:anal2} and \ref{fig:anal4}.

\begin{figure}
    \centering
    \includegraphics[width=0.6\textwidth]{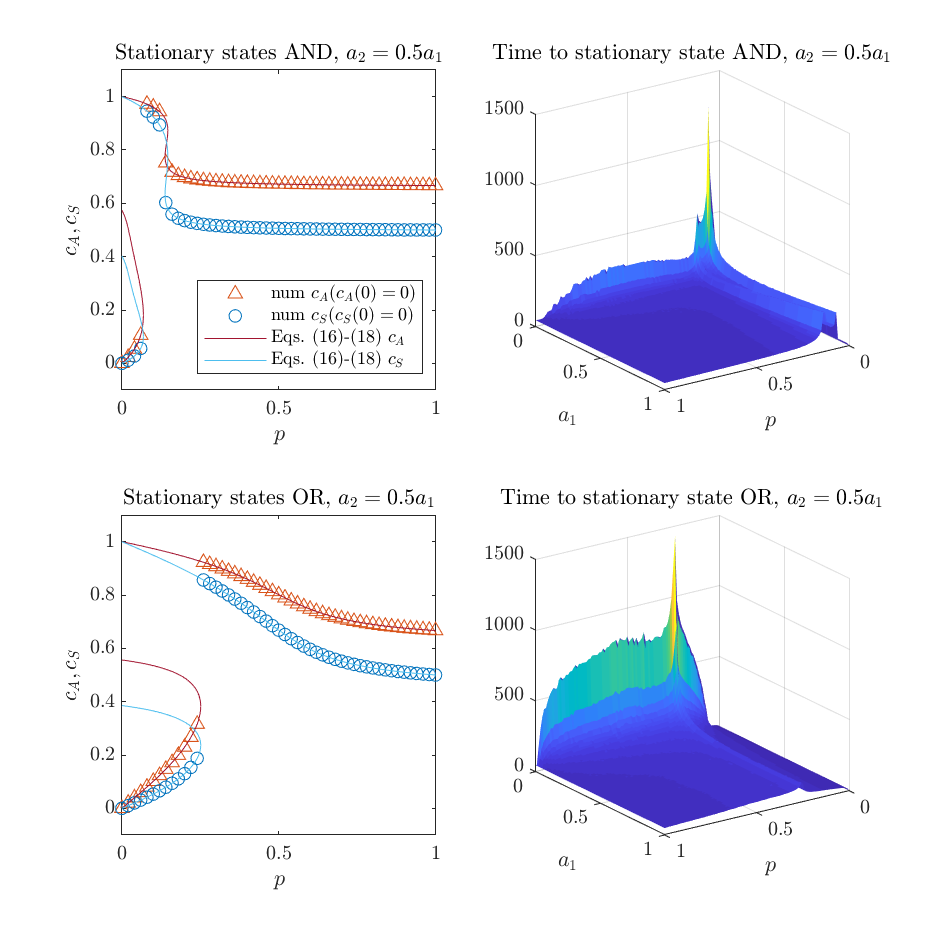}
    \caption{Stationary states (left) and time to reach a stationary state (right) for the AND (top) and the OR variant (bottom). Left side compares numerical results (from Eqs.~(\ref{eq:pv_A})-(\ref{eq:pv_Sor}); markers) vs analytical (from Eqs.~(\ref{eq:stat_A})-(\ref{eq:stat_Sor}); continuous lines). Numerical results cover only a portion of analytical ones, as they present stationary states from a single pair of initial conditions $(c_A(0)=0, c_S(0)=0)$ only, while the latter show all the possible stationary states. Right side shows time to reach a stationary state obtained with numerical methods. Adoption probabilities $a_1=0.5$ (left side only) and $a_2=0.5a_1$.}
    \label{fig:anal2}
\end{figure}

Mean-field time trajectories, and hence times to reach a stationary state, are obtained numerically from Eqs.~(\ref{eq:pv_A})-(\ref{eq:pv_Sor}). Analytically, we are only able to compute stationary states, see Eqs.~(\ref{eq:stat_A})-(\ref{eq:stat_Sor}). We compare the two in Fig.~\ref{fig:anal2}, for $a_2=0.5a_1$. Numerically computed stationary states perfectly match analytical ones. Segments not covered by the numerical results are due to the fact that the analytical solution shows all the possible stationary states, while the numerical one only those achievable from given initial conditions ($c_A(0)=0$, $c_S(0)=0$). Clearly visible here are the three possible groups of stationary states: unadopted ($c_A(T) \approx 0$, $c_S(T) \approx 0$), adopted ($c_A(T) \approx 1$, $c_S(T) \approx 1$) and disordered ($c_A(T) \approx 0.5$, $c_S(T) \approx 0.5$), although the transition between the latter two in the OR variant is very smooth. As already mentioned, the value of $a_1 \in (0,1]$ itself has no impact on a stationary state. Therefore, we set it arbitrarily to $0.5$ (only the left side of Fig.~\ref{fig:anal2}). On the right, times to reach a stationary state with respect to $a_1$ and $p$ are presented. These drop dramatically with an increase of $a_1$, but only for very low values of $a_1$. After that, the change is unnoticeable. There are significant ``ridges'' for the values of $p$ corresponding to transitions between groups of stationary states (left side). There are two such ridges in the AND variant, but only one in the OR variant, as the transition between adopted and disordered states is very smooth there. These increases in times are logical, as the system needs more time to ``decide'' which path to take, and consistent with our knowledge on phase transitions~\cite{WER22a}.

\begin{figure}
    \centering
    \includegraphics[width=0.6\textwidth]{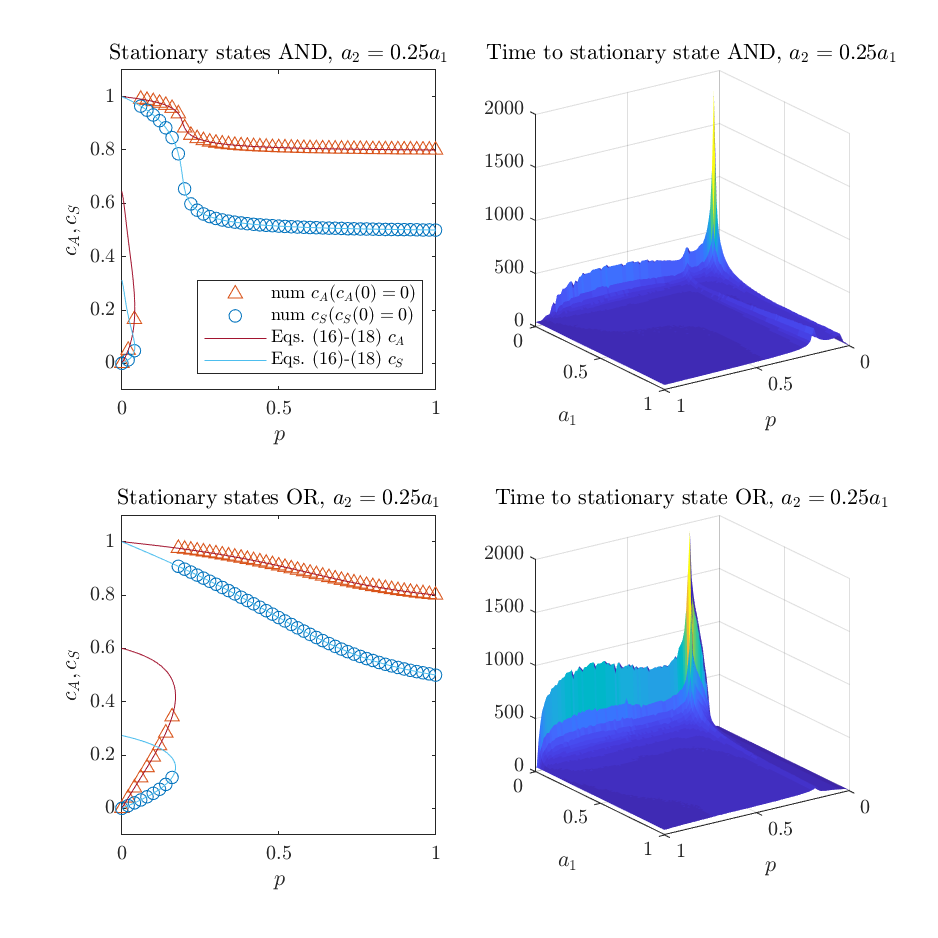}
    \caption{Stationary states (left) and time to reach a stationary state (right) for the AND (top) and the OR variant (bottom). Left side compares numerical results (from Eqs.~(\ref{eq:pv_A})-(\ref{eq:pv_Sor}); markers) vs analytical (from Eqs.~(\ref{eq:stat_A})-(\ref{eq:stat_Sor}); continuous lines). Numerical results cover only a portion of analytical ones, as they present stationary states from a single pair of initial conditions $(c_A(0)=0, c_S(0)=0)$ only, while the latter show all the possible stationary states. Right side shows time to reach a stationary state obtained with numerical methods. Adoption probabilities $a_1=0.5$ (left side only) and $a_2=0.25a_1$.}
    \label{fig:anal4}
\end{figure}

When we decrease $a_2$ to $0.25a_1$ (see Fig.~\ref{fig:anal4}), transitions between adopted and disordered states become less apparent in both variants of the model. Moreover, values of $p$ needed for adoption decrease, while those needed for disorder increase, which is consistent with the simulation results.

\section{Conclusions}
\label{sec:conclusion}

In this research, we proposed a new agent-based model that describes both opinion formation and diffusion of photovoltaic panels. Although, it is based on the well-known model of binary opinion dynamics, the $q$-voter model, we extended it by adding a second agent attribute (adoption state) and rewrote it to a two-layer network structure. To generalize the model from a single-layer, we considered two variants. First, when the $q$-voter's social influence is required on both layers of the network to impact one's opinion (the AND variant). Second, where influence from just one layer is sufficient (the OR variant). We investigated the model, the effect of parameters on stationary states and times to reach them, using both Monte Carlo computer simulations and the mean-field approximation. For the approximation's results, we used analytical methods wherever it was possible, and numerical ones otherwise.

Let us begin with our main goal, which was to build a diffusion of innovation model that 
does not reduce the dynamics to a one-dimensional mechanism, as classical models do \cite{BAS69}, but combines the adoption process with opinion formation. Consequently, diffusion of innovation and opinion dynamics are dependent on each other, with parameters only impacting one of these directly, influencing the other as well.

More specifically, within the model there exist three possible final states: unadopted system, adopted and finally, disordered one. The system transitions from the first, through the second, to the third, as the probability of independence $p$ increases. Broadly speaking, independence is initially essential for the adoption process to take off. However, as the diffusion accelerates, independence hinders innovation and ultimately prevents full adoption. Except for $a_1=0$, the probability of adoption has no impact on the stationary states, only on times to reach them. What impacts stationary states, however, is the relation between $a_1$ and probability of unadopting $a_2$. As $a_2$ becomes much smaller than $a_1$, critical values of $p$ required to reach the adopted state decrease, while those required for the disordered one increase. Basically, the greater the difference between people's willingness to install solar panels versus their willingness to get rid of them, the better for the innovation. This difference not only amplifies independence in the initial phase of adoption, but dilutes it in the final phase as well. 

Regarding the studied variants, there is a significant difference between the two, AND and OR. In the AND variant, adopted and disordered states are achieved more easily than in the OR variant, i.e. lower values of $p$ are sufficient, but the transitions between these states are more pronounced. This is because, in the AND variant the initial negative consensus is more easily disrupted than in the OR variant, but so is the later positive consensus.

Finally, it should be noted that MFA approximates the simulation results fairly well. Despite the fact that in the simulation version the network consists of a Square Lattice and a two-dimensional Watts-Strogatz graph, structures pretty different from complete graph. Still, there are visible differences. Critical values of $p$ to reach various states differ slightly, but the general picture remains.

\section*{Appendix}
\label{sec:appendix}

In the results presented above, only the $c_A(0)=0$, $c_S(0)=0$ initial state was considered. However, one could investigate others as well. Here, we examine, what happens if initially a small fraction of agents possess positive opinions and adoption states, $A(0)=+1$, $S(0)=+1$. These agents could be chosen randomly or determined by some centrality measure. We compare time trajectories for the basic case (no such agents), with the number of such agents chosen randomly and the same number determined by the highest degree on the second layer (degree centrality). 

\begin{figure}[ht]
    \centering
    \includegraphics[width=0.9\textwidth]{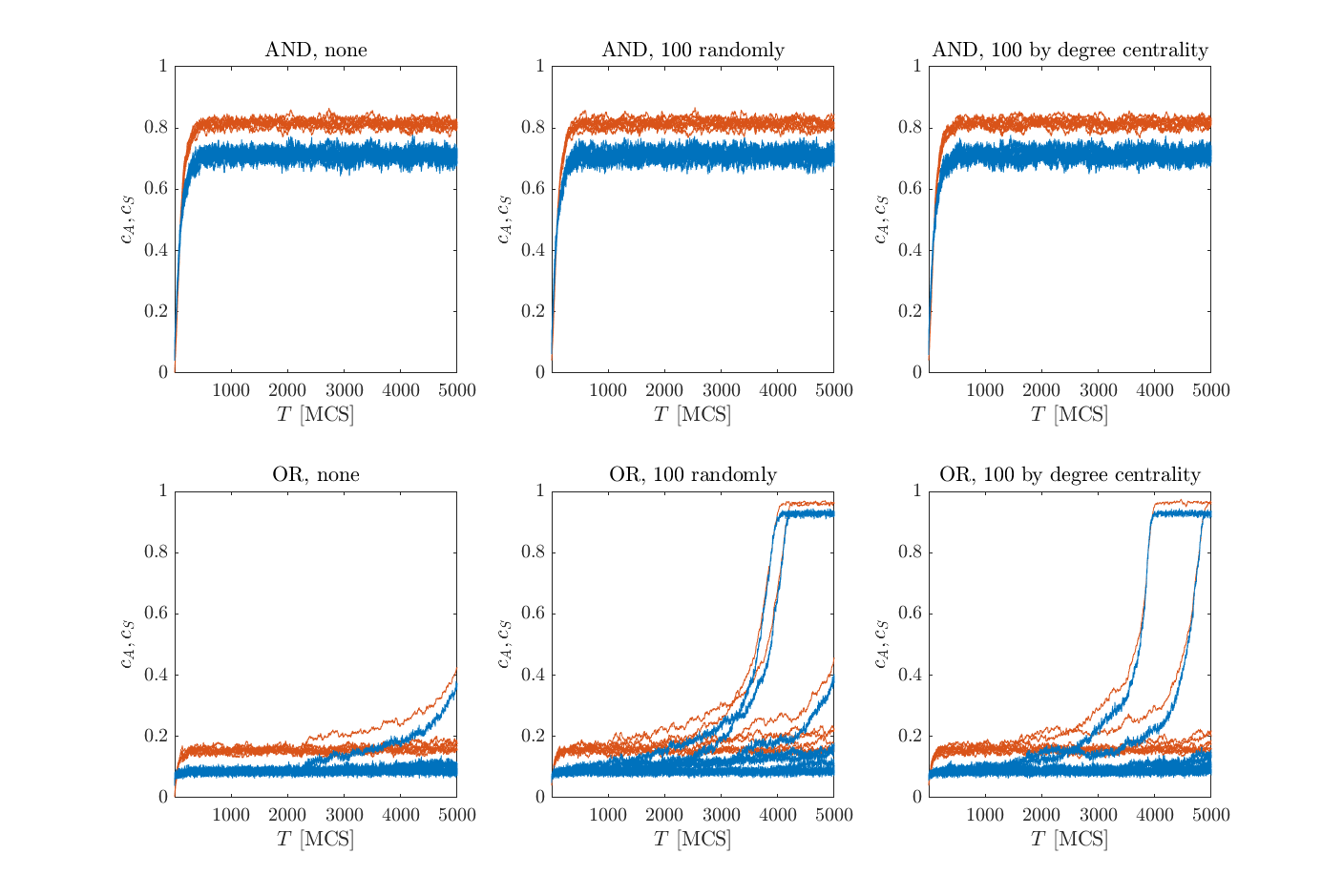}
    \caption{10 simulated time trajectories of $c_A$ (red) and $c_S$ (blue) for $c_A(0)=0$, $c_S(0)=0$ (left), 100 agents with $A(0)=+1$, $S(0)=+1$ chosen randomly (middle) and 100 agents with $A(0)=+1$, $S(0)=+1$ determined by the highest degree on the second layer (number of neighbors, right). The AND (top) and the OR variant (bottom). First layer -- SL(N,1), second layer -- WS2D(N,1,0.2), size $N=2500$ and $a_2=0.5a_1$ in all cases.}
    \label{fig:degree}
\end{figure}

Results are shown in Fig.~\ref{fig:degree}. In the AND variant there are no visible changes, as such agents easily loose either their positive opinion or adoption state when their numbers are low. Then, with the AND rule in play, they have negligible impact on the whole system. In the OR variant however, there are significant differences. Existence of such agents visibly decreases the time needed for the system to reach the adopted state. The method of their choice bears a lesser impact. This is due to the fact, that the second layer is the two-dimensional Watts-Strogatz graph, WS2D($N=2500$, $m=1$, $\beta=0.2$). Degree distribution of such a random graph corresponds to the binomial (for small networks, $N=10^2$) or the Poisson distribution (for larger networks, $N\geq 10^3$) \cite{BAR16}. Therefore, the network lacks agents with degrees greatly exceeding the average, who could bear an impact.

\section*{Acknowledgements}
This study was supported by the Ministry of Science and Higher Education, Poland, within the ‘‘Diamond Grant’’ Program through grant no. DI2019~0008~49.

\bibliographystyle{elsarticle-num} 
\bibliography{weron_szwabinski}





\end{document}